\newcommand\fs@boxedtop
   \def\@fs@mid{\vspace\abovecaptionskip\relax}%
   \let\@fs@iftopcapt\iftrue
\DeclareDocumentCommand\Pr{ m g }{%
    \ensuremath{   \IfNoValueTF {#2}
      {\mathbb{P}\left[{#1}\right]}
      {\mathbb{P}\left[{#1}\middle\vert{#2}\right]}%
    }
}
\DeclareDocumentCommand\E{ m g }{%
    \ensuremath{   \IfNoValueTF {#2}
      {\mathbb{E}\left[{#1}\right]}
      {\mathbb{E}\left[{#1}\middle\vert{#2}\right]}%
    }
}
\newif\ifdraft
\begin{document}
\raggedbottom
\title{Contingent Fees in Order Flow Auctions}

\author[Resnick]{Max Resnick$^\text{A}$ }
\address{$^\text{\MakeLowercase{a}}$Rook Labs,
\href{mailto:max@riskharbor.com}{max@rook.fi}}

\address{\today}
\thanks{Resnick gratefully acknowledges support from a Flashbots grant which funded this work. Comments from Quintus Kilbourn and Mallesh Pai were greatly appreciated}
\begin{abstract}
Many early order flow auction designs handle the payment for orders when they execute on the chain rather than when they are won in the auction. Payments in these auctions only take place when the orders are executed, creating a free option for whoever wins the order. Bids in these auctions set the strike price of this option rather than the option premium. This paper develops a simple model of an order flow auction and compares contingent fees with upfront payments as well as mixtures of the two. Results suggest that auctions with a greater share of the payment contingent on execution have lower execution probability, lower revenue, and increased effective spreads in equilibrium. A Reputation system can act as a negative contingent fee, partially mitigating the downsides; however, unless the system is calibrated perfectly, some of the undesirable qualities of the contingent fees remain. Results suggest that designers of order flow auctions should avoid contingent fees whenever possible.
\newline
\newline
\noindent \textsc{Keywords:} auctions, blockchain, order flow, on-chain, options
\newline
\newline

\end{abstract}


\maketitle

\thispagestyle{empty}

\newpage

\pagenumbering{arabic}

Suppose you sit down at a restaurant where you only pay for what you eat. How much food do you order? How much goes to waste? You could read the menu and select whichever dish sounds the best. But, wouldn't you rather order every item on the menu and see which looks the most appetizing before you make your decision? In that case, every other dish, having already been prepared by the kitchen, would be wasted, but you would be slightly happier since you had more information when selecting the meal. 

This pay-what-you-eat mechanism is more commonly known as a contingent fee. Payments contingent on information revealed after the sale of the item are useful for a few reasons, most commonly when the buyer is budget constrained initially or when the buyer is risk-averse. For example, lawyers often charge legal fees only if they win \cite{10.2307/2555962, clermont1977improving, miceli1991contingent} and oil site auctions commonly involve both an upfront payment and a contingent fee \cite{haile2010recent,hendricks1988empirical,hendricks1993optimal}, due only if the company finds oil. But, MEV searchers are neither particularly risk-averse nor budget constrained. They typically rake in significant profits, which could easily be reinvested into upfront costs for order flow. Moreover, the amount of money paid for order flow per slot is likely small relative to total profits, suggesting that execution variance washes over longer time periods.

Contingent fees erode the value of the option because they make executing the option less likely to be profitable. An upfront payment is a sunk cost so it is just as profitable to execute an order that you paid 100 dollars upfront for as it is to execute an order that you paid 1 dollar up front for, \textit{ceteris paribus}. But, once you add a contingent fee, executing the option has to not only be profitable but profitable enough to make up for the contingent fee.

When order flow auctions have high contingent fee components as a portion of the bid amount, increasing the bid devalues the option, by increasing the contingent fee. This can lead to a race to the bottom where bidders raise bids, in order to win the order flow, but in the process increase the contingent fee to a point where the order is almost never worth it to execute. 

I demonstrate this with a simple model where a single piece of order flow is auctioned off. Bidders observe the order and place bids. After the winner is chosen, new information is revealed that affects the value of executing the order (the price of the underlying asset is updated). The winning bidder then chooses whether to execute the order or not based on this information. 

I compare the equilibria in this model under purely contingent and purely upfront payment regimes. In the purely contingent regime, in a perfectly competitive auction, the equilibrium bid is high enough that the option to execute is executed almost never, and, as a consequence, all revenue vanishes. In the upfront regime revenue is equal to the expected value of the option, and there is no impact on execution probability. Mixtures of upfront and contingent payments favor the mixture with a greater upfront share on execution probability, revenue, and effective spread metrics. Results are similar when the model is extended to allow some probability of execution or failure without input from the winning bidder, as long as the winning bidder chooses to execute with positive probability. 

\section{Model}
Suppose a single transaction $t$ is being auctioned off. For simplicity assume it is a limit sell order. It follows that the right to execute the order is an option to buy whatever the transaction is selling. The value of the transaction to searchers in the order flow auction will depend linearly on the behavior of an external reference price $S_1$ which is only revealed after the auction has concluded. In particular, the value of the transaction, conditional on execution, as a function of $S_1$ is given by
\begin{equation*}
    S_1 - K
\end{equation*}
Where $K$ is the strike price of the limit order $t$. 
Suppose that $S_1$ has compact connected support. 

We will consider two similar auction styles, one with unconditional payment on winning the auction and the other where payments occur if and only if the transaction is actually executed. 

The unconditional mechanism works as follows:
\begin{enumerate}
    \item Bidders submit bids $b_1,\dots,b_n$.
    \item The bidder who submitted the highest bid wins and pays their bid $b$
    \item The winning bidder observes $S_1$ and then decides whether or not to execute the transaction.
\end{enumerate}
The winning bidder's payoff is given by:
\begin{equation*}
    V_i = \begin{cases}
        S_1 - K - b& \text{Executed}\\
        -b & \text{Not Executed}
    \end{cases}
\end{equation*}
And all other bidders receive nothing and pay nothing.

The conditional mechanism is similar except that the winning bid is paid only if the transaction executes:
\begin{enumerate}
    \item Bidders observe $S_0$ and submit bids $b_1,\dots,b_n$
    \item The bidder who submitted the highest bid wins.
    \item The winning bidder observes $S_1$ and then decides whether or not to execute the transaction. If the transaction is executed, he will pay his bid $b$. 
\end{enumerate}
The winning bidder's payoff is given by:
\begin{equation*}
    V_i = \begin{cases}
       S_1 - K - b & \text{Executed}\\
        0 & \text{Not Executed}
    \end{cases}
\end{equation*}
\subsection{Extensions}
We extend this model to mixtures of the two games by introducing a parameter $\alpha \in [0,1]$, that represents the portion of the winning bid which is paid regardless of execution. $(1-\alpha)b$ is paid when the order is executed. This model is particularly salient in the case of external penalties other than payments that correspond to execution failure, for example, many projects have reputation systems that penalize searchers who don't execute transactions; however, these reputation systems can be difficult to tune and may not penalize searchers enough (or might penalize them too much in some cases). The winning bidder's payoff in this extension as a function of $\alpha$ is given by:
\begin{equation*}
    V_i = \begin{cases}
        S_1 - K - b & \text{Executed}\\
        \alpha b & \text{Not Executed}
    \end{cases}
\end{equation*}
We add a second extension that models scenarios in which searchers may not have full control over the execution of a transaction if they win. We introduce parameters $p$ and $q$ where $p$ represents the probability that the transaction executes regardless of the winning searcher's input and $q$ is the probability that the transaction fails regardless of the winning searcher's input. $1-p-q$ represents the probability that execution is decided by the searcher.

\section{Results}
Throughout we consider competitive equilibria of the auctions, that is we assume that searchers compete away all profits, leading to the zero profit condition. Although these games look similar, their competitive equilibria could not be more different. Beginning with the upfront payment game. Notice that, by the zero profit condition, the winning bid will be equal to the expected value of the option:
\begin{equation*}
    \mathbb{E}\max(S_1 - K, 0)
\end{equation*}
But in the contingent fee game, if we compute the value of winning the auction, we see that it depends on $b$. The winning bidder will execute only when 
\begin{equation*}
    S_1 - K > b \iff S_1 - (K + b) > 0.
\end{equation*}
So the payoff for winning the auction in the contingent fee game is 
\begin{equation*}
    \max(S_1 - K -b, 0).
\end{equation*}
Again, the winning bidder's payoff should satisfy the zero profit condition. But,
\begin{align*}
&\mathbb{E}[\max(S_1-K-b,0)] = 0\\
\iff &P(S_1-K-b \geq 0) = 0\\
\iff &\max(\text{Support}(S_1)) - K \leq b
\end{align*}
So, in equilibrium, the second-highest bid is so high that it erodes the value of the order flow completely and it is never profitable for the winner to execute the transaction. We summarize these results in theorem \ref{thm:corner_cases}:

\begin{theorem}\label{thm:corner_cases} In the \textbf{upfront payment} game, the competitive equilibrium bid is just the value of the option $b = \mathbb{E} \max(S_1-K,0)$. The probability of execution is $P(S_1 > K)$ and the expected revenue is $P(S_1 > K)\mathbb{E} \max(S_1-K,0)$.

In the \textbf{contingent fee} game, the competitive equilibrium bid is $b = \max(\text{Support}(S_1))- K$. The probability of execution is $0$ and the expected revenue is $0$. 

\end{theorem}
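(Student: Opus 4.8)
The plan is to solve each game by backward induction on the winning bidder's execution decision, rewrite this as the ex-ante value of winning the auction as a function of the bid $b$, and then impose the competitive (zero-profit) condition to pin down the equilibrium bid; the execution probability and revenue follow by direct evaluation. The common first step is to isolate the execution decision, which in both games is a threshold rule in $S_1$. The only structural difference is whether $b$ enters the non-execution payoff, and this is precisely what determines whether the execution threshold itself moves with $b$.

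For the upfront game, I would first observe that the winner executes whenever $S_1 - K - b \geq -b$, i.e. whenever $S_1 \geq K$, so the execution region is independent of $b$ because the bid is sunk. Taking expectations over $S_1$, the ex-ante value of winning is $\mathbb{E}\max(S_1 - K, 0) - b$, and setting this to zero by the zero-profit condition yields $b = \mathbb{E}\max(S_1 - K, 0)$ at once. The execution probability is then read off from the threshold as $P(S_1 > K)$, and the revenue is obtained by multiplying the equilibrium bid by the frequency $P(S_1 > K)$ at which the order executes, giving the stated product. This half is essentially a one-line computation once the threshold rule is isolated.

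For the contingent game the execution rule becomes $S_1 - K - b \geq 0$, i.e. $S_1 \geq K + b$, since non-execution now yields $0$ rather than $-b$, so the threshold is pushed up by $b$. The value of winning is $\mathbb{E}\max(S_1 - K - b, 0)$, and the zero-profit condition forces this expectation to vanish. Here I would invoke the elementary fact that a nonnegative random variable with zero mean is almost surely zero: since $\max(S_1 - K - b, 0) \geq 0$, its expectation is zero only if $\max(S_1 - K - b, 0) = 0$ almost surely, equivalently $P(S_1 > K + b) = 0$. Using that $S_1$ has compact connected support, this is equivalent to $K + b \geq \max(\text{Support}(S_1))$, and competition selects the smallest such bid, giving $b = \max(\text{Support}(S_1)) - K$. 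The execution probability is then $P(S_1 > K + b) = 0$, and since payment is collected only on execution the revenue is $0$.

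The main obstacle I anticipate is the equilibrium-selection argument in the contingent game: the zero-profit condition alone is consistent with any $b \geq \max(\text{Support}(S_1)) - K$, so I would need the competitive logic — that no bidder bids strictly above what is required to drive a rival to zero profit, while any bid below the threshold leaves strictly positive profit that competition erodes — to justify equality rather than a weak inequality. A secondary point requiring care is the boundary behavior at $S_1 = \max(\text{Support}(S_1))$: whether or not the support carries an atom there, the event contributing strictly positive execution payoff has probability zero, so both the execution probability and the revenue are exactly zero. The compact connected support assumption is exactly what guarantees that $\max(\text{Support}(S_1))$ is attained and finite, so that the threshold bid is well defined.
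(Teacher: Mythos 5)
Your proposal follows essentially the same route as the paper: a threshold execution rule in $S_1$ (independent of $b$ in the upfront game, shifted to $K+b$ in the contingent game), the zero-profit condition, and the observation that a nonnegative payoff with zero mean forces $P(S_1 > K+b)=0$, hence $b \geq \max(\mathrm{Support}(S_1)) - K$. If anything you are slightly more careful than the paper, which writes the equivalence with $P(S_1 - K - b \geq 0)=0$ (weak inequality, technically wrong if the support carries an atom at its maximum) and silently selects the minimal zero-profit bid, whereas you use the strict inequality and explicitly justify the equilibrium selection.
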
 
\subsection{Extensions}
We now consider the case of the mixture of the two games. The value of the exclusive right to execute the order in period $1$ can be thought of as a call option.

\begin{equation*}
    C_K(S_1) = I_{S_1 \geq K}(S_1 - K),
\end{equation*}
where $I_{s_1 \geq K}$ is the indicator function of $s_1 \geq K$. The expected value is then:
\begin{align*}
    \mathbb{E}[C_K(S_1)] &= P(S_1 \geq K)(\mathbb{E}[S_1|S_1 > K] -K)\\
    &= (1-F(K))\left(\int_K^\infty \frac{f(x)}{1-F(K)}dx - K\right)
\end{align*}
Now we consider a setting where the winning bidder, denoted $w$ pays $\alpha b$ regardless of execution and an additional $(1-\alpha)b$ upon execution. This has the effect of shifting the strike price of the option by $(1-\alpha)b$. The winner's expected utility is then given by:
\begin{equation}\label{eq:1}
    \mathbb{E}[U_w] = (1-F(K+(1-\alpha)b))\left(\int_{K+(1-\alpha)b}^\infty \frac{f(x)}{1-F(K+(1-\alpha)b)}dx - K-(1-\alpha)b\right) - \alpha b.
\end{equation}
By the zero profit condition, the equilibrium bid is the solution to:
\begin{equation}\label{eq:2}
    \alpha b = (1-F(K+(1-\alpha)b))\left(\int_{K+(1-\alpha)b}^\infty \frac{f(x)}{1-F(K+(1-\alpha)b)}dx - K-(1-\alpha)b\right) .
\end{equation}
Notice that \eqref{eq:1} is positive for $b=0$ and eventually negative as $b \to \infty$ by the law of total probability. Therefore we can use a bracketing zero finder to solve \eqref{eq:2}. 

We will denote the solution for b in \eqref{eq:2} as $b^*$ then we can write the probability of execution as:
\begin{equation}\label{eq:P_execution}
    P(\text{Execution}) = (1-F(K+(1-\alpha)b^*(\alpha))
\end{equation}

The effective spread is given by the expectation of $S_1 - K$ conditional on the order executing. The conditional PDF of $S_1$ given $S_1 > K + (1-\alpha)b^*(\alpha)$ is:
\begin{equation*}
    f_{S_1|S_1 > K + (1-\alpha) b^*} = \frac{f(x)}{1-F(K+(1-\alpha)b^*(\alpha))},
\end{equation*}
for $S_1 > K + (1-\alpha)b^*(\alpha)$ and $0$ otherwise. Multiplying by the effective spread and integrating we get
\begin{align*}
    E[S_1 -K|S_1 > K + (1-\alpha)b^*(\alpha)] &= \int_{K+(1-\alpha)b^*(\alpha)}^\infty (x-K) f_{S_1|S_1 > K + (1-\alpha) b^*}(x) dx\\
    &= \dfrac{\int_{K+(1-\alpha)b^*(\alpha)}^\infty (x-K)f(x)dx}{1-F(K+(1-\alpha)b^*(\alpha))}
\end{align*}
The auction's revenue is given by the up-front payment $\alpha  b^*(\alpha)$ and the contingent fee $(1-\alpha)b^*(\alpha)$ multiplied by the probability of execution $1 - F(K+(1-\alpha)b^*(\alpha))$.
\begin{equation*}
    \alpha b^*(\alpha) - (1-F(K + (1-\alpha)b^*(\alpha))(1-\alpha)b^*(\alpha)
\end{equation*}
Results for the uniform case are availible in the apendix, section \ref{sec:apendix}, and visual representations can be seen in figure \ref{fig:uniform}. Figure \ref{fig:Beta} displays results for several parametrizations of the Beta distribution. 

\begin{figure}
    \centering
    \includegraphics[width = \textwidth]{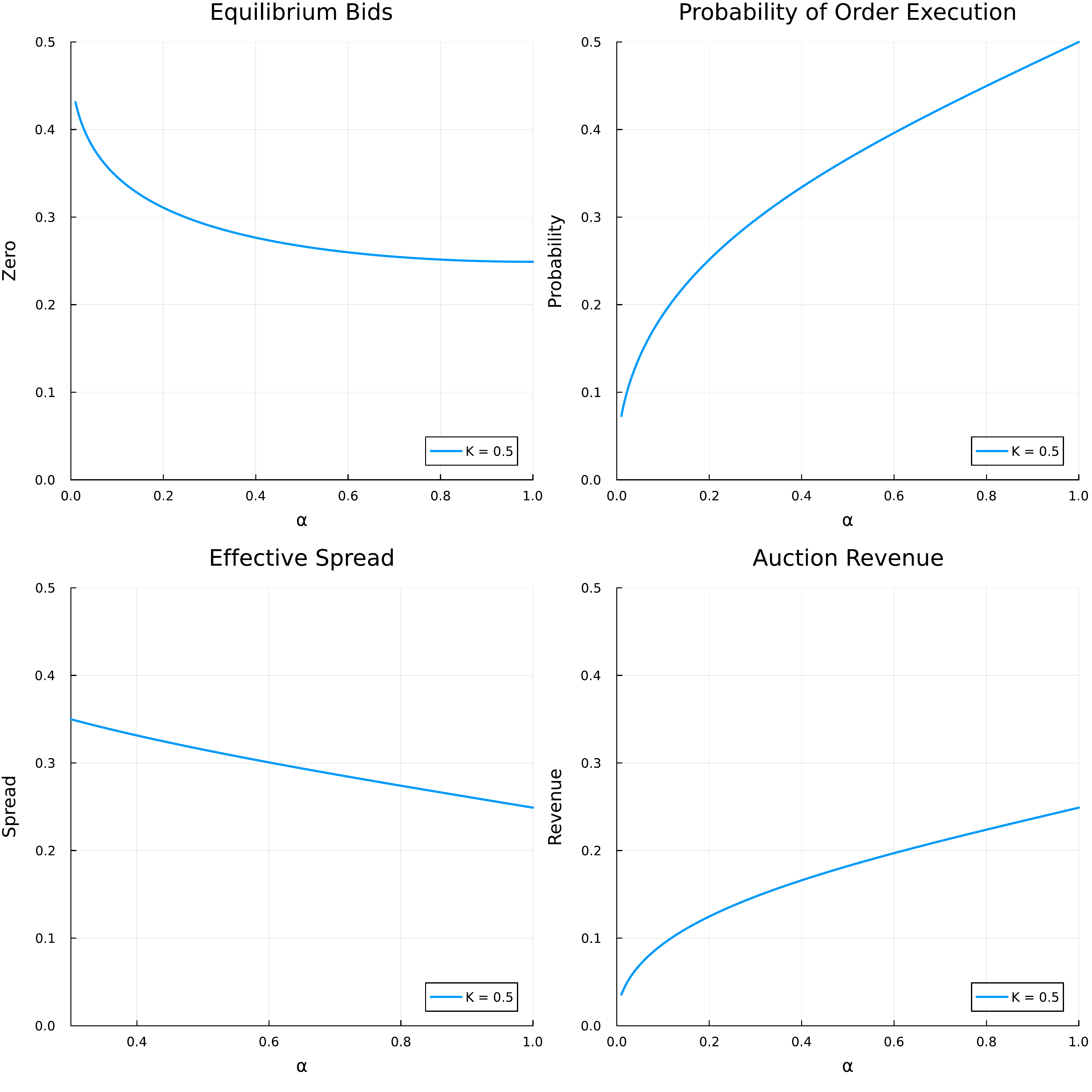}
    \caption{Results for $S_1 \sim U[0,1]$, $K = .5$ }
    \label{fig:uniform}
\end{figure}
\begin{theorem}
The following hold
\begin{enumerate}
    \item Probability of order execution is monotonically increasing in $\alpha$.
    \item Auction revenue is monotonically increasing in $\alpha$.
    \item Effective spread is monotonically decreasing in $\alpha$.
\end{enumerate}
\end{theorem}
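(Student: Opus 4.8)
The plan is to collapse all three monotonicity claims into a single statement about the equilibrium \emph{effective strike}
\[
\kappa^*(\alpha) := K + (1-\alpha)\,b^*(\alpha),
\]
and then to prove that $\kappa^*$ is decreasing in $\alpha$. First I would rewrite the zero-profit condition \eqref{eq:2}. Writing $G(\kappa) := \int_\kappa^\infty (x-\kappa)f(x)\,dx = \mathbb{E}[\max(S_1-\kappa,0)]$ for the value of a call struck at $\kappa$, a short manipulation shows \eqref{eq:2} is exactly $\alpha b = G(\kappa)$ with $\kappa = K+(1-\alpha)b$. Substituting $(1-\alpha)b = \kappa-K$ and introducing the odds ratio $\lambda := \alpha/(1-\alpha)$ turns the equilibrium condition into the single-variable equation $\lambda\,(\kappa-K) = G(\kappa)$. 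The point of this reparametrization is that every target quantity depends on $\alpha$ only through $\kappa^*$: the execution probability is $1-F(\kappa^*)$ by \eqref{eq:P_execution}; using $\alpha b^* = G(\kappa^*)$ and $(1-\alpha)b^* = \kappa^*-K$, the revenue $\alpha b^* + (1-\alpha)b^*\bigl(1-F(\kappa^*)\bigr)$ telescopes to $\int_{\kappa^*}^\infty (x-K)f(x)\,dx$; and the effective spread is $\mathbb{E}[\,S_1-K\mid S_1>\kappa^*\,]$.

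Next I would record the monotonicity of these three maps in $\kappa$. Since $F$ is increasing, $1-F(\kappa)$ is decreasing; since $(x-K)\ge 0$ for $x\ge\kappa\ge K$, the revenue integral is decreasing in $\kappa$ (its derivative is $-(\kappa-K)f(\kappa)\le 0$); and a mean-residual-life computation gives
\[
\frac{d}{d\kappa}\,\mathbb{E}[\,S_1-K\mid S_1>\kappa\,] = \frac{f(\kappa)}{\bigl(1-F(\kappa)\bigr)^2}\,G(\kappa) \ge 0,
\]
so the effective spread is increasing in $\kappa$. Hence claims (1)--(3) follow at once, provided I show $\kappa^*(\alpha)$ is decreasing in $\alpha$.

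The heart of the argument is therefore the behavior of $\kappa^*$. Writing $\bar s := \max\mathrm{supp}(S_1)<\infty$ and assuming the option has positive value ($K<\bar s$), I define $H(\kappa):=G(\kappa)/(\kappa-K)$ on $(K,\bar s)$, so the equilibrium condition reads $H(\kappa)=\lambda$. I would show $H$ is a strictly decreasing continuous bijection from $(K,\bar s)$ onto $(0,\infty)$. The key computation is the sign of
\[
H'(\kappa)=\frac{G'(\kappa)(\kappa-K)-G(\kappa)}{(\kappa-K)^2};
\]
because $G'(\kappa)=-(1-F(\kappa))\le 0$ while $G(\kappa)\ge 0$, the numerator is $\le 0$, and it is strictly negative for $\kappa<\bar s$ since connectedness of the support forces $G(\kappa)>0$ there. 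The limits $H(\kappa)\to+\infty$ as $\kappa\to K^+$ (because $G(K)=\mathbb{E}\max(S_1-K,0)>0$) and $H(\kappa)\to 0$ as $\kappa\to\bar s^-$ give surjectivity. As $\lambda=\alpha/(1-\alpha)$ is increasing in $\alpha$ and $H^{-1}$ is decreasing, $\kappa^*(\alpha)=H^{-1}(\lambda(\alpha))$ is decreasing; the endpoints $\alpha=0,1$ (where $\kappa^*=\bar s$ and $\kappa^*=K$) match Theorem~\ref{thm:corner_cases} and are absorbed by continuity.

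I expect the only real obstacle to be the monotonicity of $H$, i.e.\ the sign of $G'(\kappa)(\kappa-K)-G(\kappa)$; once reduced to the two facts $G'=-(1-F)\le 0$ and $G\ge 0$ it is immediate, so the bulk of the work is the bookkeeping of the reparametrization and verifying the revenue telescoping identity. A minor point to watch is distinguishing weak from strict monotonicity at the edges of the support and at $\alpha\in\{0,1\}$, which is precisely why I lean on connectedness of $\mathrm{supp}(S_1)$ to keep $G>0$ on the interior.
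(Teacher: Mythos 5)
Your proposal is correct, and it takes a genuinely different — and substantially more rigorous — route than the paper's own proof. The paper argues each claim separately and informally: for (1) it simply \emph{asserts} that $b^*(\alpha)$ is decreasing in $\alpha$ (true, but it needs an implicit-differentiation argument the paper never supplies, and it is the crux of the whole proof), while (2) and (3) are one-sentence verbal appeals to the zero-profit condition and to ``eliminating a smaller region of support.'' You instead funnel all three claims through the single quantity $\kappa^*(\alpha)=K+(1-\alpha)b^*(\alpha)$ and prove its monotonicity honestly, by rewriting \eqref{eq:2} as $H(\kappa^*)=\lambda$ with $H(\kappa)=\mathbb{E}[\max(S_1-\kappa,0)]/(\kappa-K)$ a strictly decreasing continuous bijection of $(K,\bar s)$ onto $(0,\infty)$ and $\lambda=\alpha/(1-\alpha)$ increasing. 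This buys three things the paper lacks: (i) you never need the unproven lemma that $b^*$ itself is monotone, since everything depends on $\alpha$ only through $\kappa^*$; (ii) you get existence and uniqueness of the equilibrium for free, where the paper only gestures at a bracketing zero finder; (iii) the telescoped revenue identity $\int_{\kappa^*}^\infty(x-K)f(x)\,dx$ and the mean-residual-life derivative $\frac{f(\kappa)}{(1-F(\kappa))^2}\,G(\kappa)\ge 0$ turn claims (2) and (3) into genuine calculations rather than intuitions. Your charitable readings are also the correct ones: the integrand in \eqref{eq:2} is intended to be $x f(x)$ (as written, the inner integral is identically $1$), and the paper's displayed revenue formula has a sign typo (the ``$-$'' should be ``$+$''), which your telescoping silently fixes. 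Two minor quibbles: connectedness of the support is not what forces $G(\kappa)>0$ on $(K,\bar s)$ — maximality of $\bar s$ alone puts mass in $(\kappa,\bar s]$ — and your derivative computations implicitly assume $F$ admits a density, though the paper assumes this implicitly too; your endpoint checks against Theorem \ref{thm:corner_cases} are exactly right.
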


\begin{proof}
(1) follows immediately from \ref{eq:P_execution}
\begin{equation*}
    P(\text{Execution}) = (1-F(K+(1-\alpha)b^*(\alpha))
\end{equation*}
Note that $(1-\alpha)$ is decreasing in $\alpha$ and $b^*(\alpha)$ is decreasing in $\alpha$, therefore $K+(1-\alpha)b^*(\alpha)$ is decreasing in $\alpha$. So $F(K+(1-\alpha)b^*(\alpha))$ is decreasing in $\alpha$ because $F$ is a cumulative density function and is therefore increasing. Inverting one final time we have that $1- F(K+(1-\alpha)b^*(\alpha))$ is increasing in $\alpha$. 

(2) follows from the zero profit condition. The amount the winning bidder pays in expectation must be equal to the value of the option that they receive. Since the contingent fee erodes the value of the option in proportion to $b^*(\alpha)$ which is decreasing in $\alpha$ revenue must be increasing in $\alpha$. 

(3) is a result of eliminating a smaller region of $S_1$'s support closest to $K$ as $\alpha$ increases. The result is that the average execution quality increases in $\alpha$.
\end{proof}
\begin{figure}[b]
    \centering
    \includegraphics[width = \textwidth]{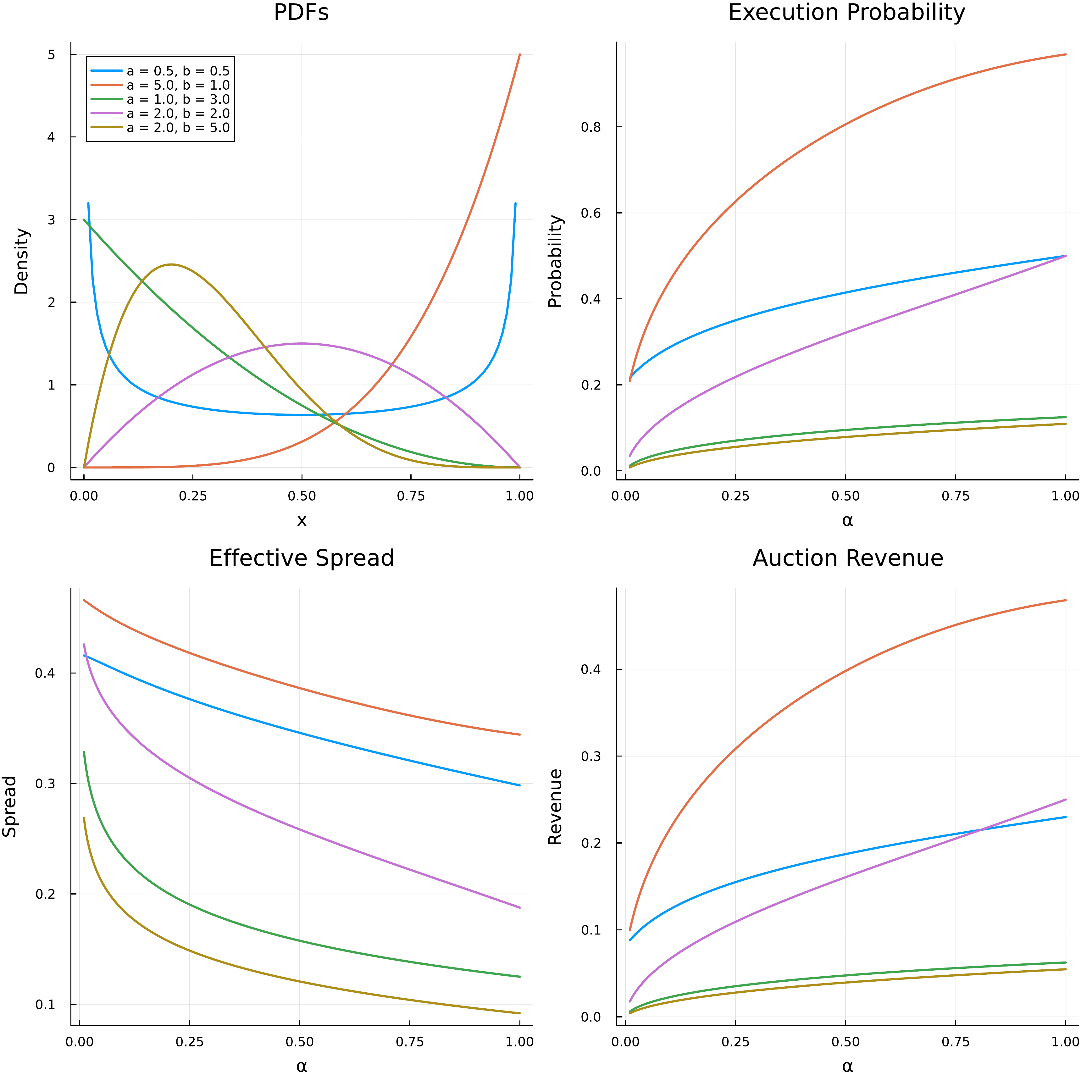}
    \caption{Results for $S_1 \sim \mathbb{\beta}[a,b]$, $K = .5$ }
    \label{fig:Beta}
\end{figure}

\subsection{Extension to involuntary execution and failure}
Now we extend the model to a situation where with probability $p$, the execution takes place automatically, with probability $q$ execution fails, and with probability $1-p-q$ the execution is optional (chosen by the winning bidder). In this scenario, the expected utility for the winner is given by:
\begin{align*}
    \mathbb{E}[U_w] &= p( \mathbb{E}[S_1] - K -(1-\alpha) b)\\
    &+ (1-p-q)(1-F(K+(1-\alpha)b))\left(\int_{K+(1-\alpha)b}^\infty \frac{f(x)}{1-F(K+(1-\alpha)b)}dx - K-(1-\alpha)b\right)\\
    &- \alpha b
\end{align*}
This can also be solved using a bracketing zero finder. Letting $b^*$ denote the solution as before, we can write the probability of execution as 
\begin{equation*}
    P(\text{Execution}) = p + (1-p-q)(1-F(K+(1-\alpha)b^*(\alpha))
\end{equation*}
But with $p+q < 1$ This reduces to the model in the previous section by modifying $F$, and $\alpha$. The modification procedure for $F$ involves inserting a mass of probability $q$, the probability of automatic failure, below the strike price $K$ then normalizing the distribution by dividing through by $1+q$. The modification procedure for $\alpha$ is to set $\alpha' = \alpha + p(1-\alpha)$. 

Therefore the insights of the previous section hold here as long as $p+q < 1$. 
\section{Discussion}

This paper primarily examines MEV order flow auction payment structures. Reputation systems, acting as negative contingent fees where reputational damage rather than monetary loss results from not executing orders, could also improve auction performance. We discuss the potential and limitations of reputation systems in MEV auctions.

Reputation systems can incentivize participants to promote market efficiency and stability by penalizing undesirable behaviors. Participants may be more inclined to execute orders when reputation is at stake, even with economically unattractive contingent fees.

However, reputation systems have limitations:

\textbf{Manipulation:} Susceptible to strategic behaviors that inflate reputation scores, such as collusion, fake accounts, or exploiting loopholes, which could undermine the system's efficacy.

\textbf{Complexity:} Designing and implementing an accurate, transparent, fair, and manipulation-resistant reputation system can increase the overall auction mechanism complexity.

\textbf{Enforcement:} Enforcing reputational penalties is challenging, as participants in decentralized environments or markets with low entry barriers may re-enter under new identities, circumventing reputational consequences.

\textbf{Trade-offs:} Introducing reputation systems creates trade-offs in auction design, such as reduced competition or increased barriers to entry for new participants, potentially offsetting benefits from improved execution rates and reduced effective spreads.

In summary, reputation systems offer potential benefits but have limitations. MEV auction designers should weigh benefits against challenges and trade-offs. Further research could explore optimal reputation system design in MEV order flow auctions and investigate how these systems interact with payment structures to achieve desired outcomes.

\subsection{Conclusion}
This straightforward model of order flow auctions reveals the impact of varying payment structures—contingent fees, upfront payments, and their mixtures—on key outcomes. Findings indicate that higher contingent payments result in lower execution probability, decreased revenue, and increased effective spreads.

These insights hold significant implications for order flow auction design, suggesting a focus on upfront payments rather than contingent fees. Overemphasizing contingent fees risks inciting a race to the bottom, adversely impacting users and the order flow provider.

Future research could expand the model by incorporating multiple orders or multiple blocks to execute and explore more realistic assumptions about bidder risk aversion and budget constraints.

\newpage

\bibliographystyle{aer}
\bibliography{onchain-auctions}

\newpage

\section{Closed from solution for the uniform case}\label{sec:apendix}
Letting $K = 1/2$ and $S_1 \sim U[0,1]$ for the purpose of exposition, we can solve \eqref{eq:1} explicitly for $b^*$
\begin{align*}
\alpha b& = (1-F(K+(1-\alpha)b))\left(\int_{K+(1-\alpha)b}^\infty \frac{f(x)}{1-F(K+(1-\alpha)b)}dx - K-(1-\alpha)b\right)\\
\alpha b& = (1-1/2-(1-\alpha)b)\left(\int_{1/2+(1-\alpha)b}^1 \frac{1}{1-1/2-(1-\alpha)b}dx - 1/2-(1-\alpha)b\right)\\
\alpha b & = 1- 1/2 -(1-\alpha)b +(1-1/2-(1-\alpha)b)(-1/2 - (1-\alpha)b)\\
\alpha b & =-2(1-\alpha)b + 1/4 +(1-\alpha)b + (1-\alpha)^2b^2\\
\alpha b & =-(1-\alpha)b + 1/4  + (1-\alpha)^2b^2\\
1/4 & = (1-\alpha)^2b^2 -(1-\alpha)b - \alpha b\\
0 &= ((1 - \alpha)^2) b^2 - b + \frac{1}{4}\\
b &= \frac{1 \pm \sqrt{1 - (1 - \alpha)^2}}{2(1 - \alpha)^2}
\end{align*}
The negative branch is the correct one here so our formula for $b^*(\alpha)$ is:
\begin{equation*}
   b^*(\alpha) = \frac{1 - \sqrt{1 - (1 - \alpha)^2}}{2(1 - \alpha)^2}
\end{equation*}
The Probability of execution is then 
\begin{equation*}
    1/2 - \frac{1 - \sqrt{1 - (1 - \alpha)^2}}{2(1 - \alpha)}
\end{equation*}
Auction revenue is
\begin{equation*}
    \alpha \frac{1 - \sqrt{1 - (1 - \alpha)^2}}{2(1 - \alpha)^2} +  (1-\alpha)\left(1/2 - \frac{1 - \sqrt{1 - (1 - \alpha)^2}}{2(1 - \alpha)}\right) \frac{1 - \sqrt{1 - (1 - \alpha)^2}}{2(1 - \alpha)^2}
\end{equation*}

\end{document}